\documentclass{article}

\usepackage{amsmath}
\usepackage{amssymb}
\usepackage{amsfonts}
\usepackage{graphicx}

\newcommand{\ket}[1]{\vert {#1} \rangle}  
\newcommand{\bra}[1]{\langle {#1} \vert}

\newcommand{\CNOT}{{\rm CNOT}}
\newcommand{\QCNOT}{\mathbb{CNOT}}

\title{Fuzzy approach for $\CNOT$ gate in quantum computation with mixed states}
\author{{\sc G. Sergioli}$^{1}$ and \ {\sc H. Freytes}$^{1,2}$}

\date{{\small
1. University of Cagliari, Viale Merello 92, 09123, Cagliari-Italy\\
2. Department of Mathematics UNR-CONICET, Av. Pellegrini 250, CP 2000, Rosario, Argentina.}}

\begin{document}

\maketitle

\begin{abstract}
In the framework of quantum computation with mixed states, a fuzzy representation of $\CNOT$ gate is introduced. In this representation, the incidence of
non-factorizability is specially investigated.
\end{abstract}

\begin{small}

  {\em PACS numbers: 03.67.Lx, 02.10.-v}

  {\em Keywords: $\CNOT$ quantum gate, quantum operations, non-factorizability.}

\end{small}


\newtheorem{theo}{Theorem}[section]
\newtheorem{definition}[theo]{Definition}
\newtheorem{lem}[theo]{Lemma}
\newtheorem{prop}[theo]{Proposition}
\newtheorem{coro}[theo]{Corollary}
\newtheorem{exam}[theo]{Example}
\newtheorem{rema}[theo]{Remark}{\hspace*{4mm}}
\newtheorem{example}[theo]{Example}
\newcommand{\proof}{\noindent {\em Proof:\/}{\hspace*{4mm}}}
\newcommand{\qed}{\hfill$\Box$}

\section*{Introduction}

The concept of quantum computing, introduced at the beginning of 1980s by Richard Feynman, is animated by the fact that quantum systems make possible new interesting forms of computational and communication processes. In fact, quantum computation can be seen as an extension of classical computation where new primitive information resources are introduced. Especially, the concept of quantum bit (qubit for short) which is the quantum counterpart of the classical bit. Thus, new forms of computational processes are developed in order to operate with these new information resources. 
In classical computation, information is encoded by a series of bits represented by the binary values $0$ and $1$. Bits are manipulated via ensemble of logical gates such as {\it NOT, OR, AND} etc, that form a circuit giving out the result of a calculation.

Standard quantum computing is based on quantum systems described by finite dimensional Hilbert spaces, such as ${\mathbb{C}}^2$, that is the
two-dimensional Hilbert space where a generic {\it qubit} lives. Hence, a qubit is represented by a unit vector
in ${\mathbb{C}}^2$ and, generalizing to a positive integer $n$, {\it $n$-qubits} are represented by unit vectors in
${\mathbb{C}}^{2^n}={\otimes^n} {\mathbb{C}}^2 = {\mathbb{C}}^2 \otimes
{\mathbb{C}}^2 \otimes \ldots \otimes {\mathbb{C}}^2$ ($n$- times). Similarly to the classical case, we can introduce and study the behavior of a number of {\it quantum
logical gates} (hereafter quantum gates for short) operating on qubits. As in the classical case, a quantum circuit is identified with an appropriate composition of quantum gates. They
are mathematically represented by unitary operators acting on pure states. In this framework only reversible processes are considered. But for many reasons this restriction is unduly. On the one hand, it does not encompass realistic physical states described by mixtures. In fact, a quantum system rarely is in a pure state. This may be
caused, for example, by the incomplete efficiency in the preparation procedure and also by manipulations on the system as measurements over pure states, both of which produce statistical mixtures. It motivated the study of a more general model of quantum computational processes, where pure states and unitary operators are replaced by density operators and quantum operations, respectively. This more general approach, where not only reversible transformations are considered, is called {\it quantum computation with mixed states} \cite{AKN, DF, FD, FSA, GUD1}. In this powerful model, fuzzy logic can play an important role to describe certain aspects of the combinational structures of quantum circuits. 

Our work is motivated on a fuzzy behavior of the $\CNOT$ quantum gate that arises when the model of quantum computation with mixed states is considered.

The paper is organized as follows: Section 1 contains generalities about tensor product structures to describe bipartite quantum systems. In Section 2
we recall some basic notions about the model of quantum computation with mixed states. In Section 3 we study fuzzy aspects of the $\CNOT$ when factorized inputs are considered. Section 4 generalizes the precedent section by considering non-factorized states. Finally in Section 5 we establish necessary and sufficient condition on the input of $\CNOT$ for which the factorizability of density operators in $\otimes^2{\mathbb{C}}^2$ is preserved by $\CNOT$.

\section{Bipartite systems} \label{HOL}

The notion of {\it state of a physical system} is familiar from its use in classical mechanics, where it is linked to the initial
conditions (the initial values of position and momenta) which determine the solutions of the equations of motion for the system. For
any value of time, the state is represented by a point in the phase space. In classical physics, compound systems can be decomposed into their subsystems. Conversely, individual systems can be combined to give overall composite systems. In this way, a classical global system is completely described in terms of the states of its subsystems and their mutual dynamic interactions. This property about classical systems is known as {\it separability principle}. From a mathematical point of view, the separability condition of classical systems comes from the fact that states of compound systems are represented as direct sum of the states of their subsystems. 

In quantum mechanics the  description of the state  becomes substantially modified. A quantum state can be either pure or mixed. A {\it pure state} is described by a unit vector in a Hilbert space and it is denoted by $\vert \varphi \rangle$ in Dirac notation.  When a quantum system is not in a pure state, it is represented by a probability distibution of pure states, the so called \emph{mixed state}.
Mixed states are mathematically modelled by {\it density operators} on a Hilbert space, i.e. positive self-adjoint, trace class operators. In terms of density operators, a pure state $\vert \psi \rangle$ can be represented as the projector $\rho = \vert \psi \rangle \langle \psi \vert$. Thus, in quantum theory, the most general description of a quantum states is encoded by density operators.

In quantum mechanics a system consisting of many parts is represented by the tensor product of the Hilbert spaces associated with the individual parts. We restrict our investigation to compound systems living in the bipartite Hilbert space of the form ${\cal H}_1 \otimes {\cal H}_2$, where ${\cal H}_1$ and ${\cal H}_2$ are finite dimensional.  
But not all density operators on ${\cal H}_1 \otimes {\cal H}_2$ are expressible as $\rho = \rho_1 \otimes \rho_2$, where $\rho_i$ is a density operator living in ${\cal H}_i$, for $i\in\{1,2\}$. Thus, there exist properties of quantum systems that characterize the whole system but that are not reducible to the local properties of its parts. Unlike classical physics, compound quantum systems can violate the separability principle. 

From a mathematical point of view, the origin of this difference between classical and quantum systems arises from the tensor product structure related to the Hilbert spaces. More precisely, the non-factorizability property of quantum states is related to the fact that the direct sum of ${\cal H}_1$ and ${\cal H}_2$ is a proper subset of ${\cal H}_1 \otimes {\cal H}_2$. 

In what follows we provide a formal description of this instance of non-factorizability of quantum states in compound systems of the form ${\cal H}_1 \otimes {\cal H}_2$.

Due to the fact that the Pauli matrices 
 \begin{equation*}
  \sigma_1 =
  \left[\begin{array}{cc}
    0 & 1 \\
    1 & 0
  \end{array}\right],
  \quad
  \sigma_2 =
  \left[\begin{array}{cc}
    0 & -i \\
    i & 0
  \end{array}\right],
  \quad
  \sigma_3 =
  \left[\begin{array}{cc}
    1 & 0 \\
    0 & -1
  \end{array}\right]
\end{equation*}
and $I$ are a basis for the set of operators over ${\mathbb{C}}^2$, an arbitrary density operator $\rho$ over ${\mathbb{C}}^2$ may be represented
as $$\rho=\frac{1}{2}(I+s_1\sigma_1+s_2\sigma_2+s_3\sigma_3)$$ where $s_1,s_2$ and $s_3$ are three real numbers such $s_1^2+s_2^2+s_3^2\le 1$. The triple $(s_1,s_2,s_3)$ represents the point of the Bloch sphere that is uniquely associated to $\rho$. A similar canonical representation can be obtained for any $n$-dimensional Hilbert space by using the notion of generalized Pauli-matrices.

\begin{definition}
{\rm Let $\mathcal H$ be a $n$-dimensional Hilbert space and $\{\ket{\psi_1},\ldots,\ket{\psi_n}\}$ be the canonical othonormal basis of $\mathcal H$. Let $k$ and $j$ be two natural numbers such that: $1\le k < j \le n$. Then, the {\it generalized Pauli-matrices} are defined as follows: 
 $$^{(n)}\sigma_1^{[k,j]}=
\ket{\psi_j}\bra{\psi_k}+
\ket{\psi_k}\bra{\psi_j}$$
 $$^{(n)}\sigma_2^{[k,j]}=
i(\ket{\psi_j}\bra{\psi_k}-\ket{\psi_k}\bra{\psi_j})$$
and for $1\le k \le n-1$
$$^{(n)}\sigma_3^{[k]}= \sqrt{\frac{2}{k(k+1)}}(\ket{\psi_1}\bra{\psi_1}+\cdots+\ket{\psi_k}\bra{\psi_k}-k\ket{\psi_{k+1}}\bra{\psi_{k+1}}).$$
}
\end{definition}

If $\mathcal H=\mathbb C^2$ one immediately obtains:  $^{(2)}\sigma_1^{[1,2]}= \sigma_1$, $^{(2)}\sigma_2^{[1,2]}= \sigma_2$ and $^{(2)}\sigma_3^{[1]}= \sigma_3.$

Let $\rho$ be a density operator of the $n$-dimensional Hilbert space ${\cal H}$. For any $j$, where $1\leq j \leq n^2-1$, let $$s_j(\rho) = tr(\rho \sigma_j) .$$ The sequence $\langle s_1(\rho)\ldots s_{n^2-1}(\rho) \rangle$ is called the {\it generalized Bloch vector} associated to $\rho$, in view of the following well known result:

\begin{theo}\label{BLOCHVECT}{\rm \cite{SM}}
Let $\rho$ be a density operator of the $n$-dimensional Hilbert space ${\cal H}$ and let $\sigma_j\in \mathfrak P_n$. Then $\rho$ can be canonically represented as follows: $$\rho = \frac{1}{n}I^{(n)} + \frac{1}{2}\sum_{j=1}^{n^2-1}s_j(\rho)\sigma_j $$ where $I^{(n)}$ is the $n\times n$ identity matrix.
\qed

\end{theo}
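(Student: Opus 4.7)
The plan is to show that $\{I^{(n)}\}\cup \mathfrak{P}_n$ is an orthogonal basis of the real vector space of Hermitian operators on $\mathcal H$ with respect to the Hilbert-Schmidt inner product $\langle A,B\rangle = tr(A^\dagger B)$, and then to use $tr(\rho)=1$ together with the definition $s_j(\rho)=tr(\rho\sigma_j)$ to read off the coefficients.

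First I would do a dimension count: the space of Hermitian $n\times n$ matrices has real dimension $n^2$, and the collection $\{I^{(n)},\,{}^{(n)}\sigma_1^{[k,j]},\,{}^{(n)}\sigma_2^{[k,j]}\;(1\le k<j\le n),\,{}^{(n)}\sigma_3^{[k]}\;(1\le k\le n-1)\}$ contains exactly $1+\binom{n}{2}+\binom{n}{2}+(n-1)=n^2$ Hermitian matrices. So it suffices to verify linear independence, which will follow from mutual orthogonality.

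Next I would verify the orthogonality relations $tr(\sigma_i\sigma_j)=2\delta_{ij}$ and $tr(\sigma_j)=0$ for all $\sigma_i,\sigma_j\in\mathfrak P_n$. This splits into cases according to the three families. For the off-diagonal families, using $\ket{\psi_a}\bra{\psi_b}\cdot\ket{\psi_c}\bra{\psi_d}=\delta_{bc}\ket{\psi_a}\bra{\psi_d}$ one computes quickly that $tr({}^{(n)}\sigma_1^{[k,j]}\cdot{}^{(n)}\sigma_1^{[k',j']})=2\delta_{kk'}\delta_{jj'}$, analogously for the $\sigma_2$ family, and that $\sigma_1$- and $\sigma_2$-type matrices are mutually orthogonal and both orthogonal to every $\sigma_3$-type matrix (since the latter are diagonal while the former are off-diagonal). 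For the diagonal family, the normalization constant $\sqrt{2/(k(k+1))}$ in the definition is precisely chosen so that $tr(({}^{(n)}\sigma_3^{[k]})^2)=\frac{2}{k(k+1)}(k+k^2)=2$, and orthogonality between ${}^{(n)}\sigma_3^{[k]}$ and ${}^{(n)}\sigma_3^{[k']}$ with $k<k'$ follows because on the first $k$ diagonal entries the second matrix is constant while the first has mean zero there. Finally $tr(\sigma_j)=0$ for every $\sigma_j\in\mathfrak P_n$ (by inspection) and $tr(I^{(n)}\sigma_j)=tr(\sigma_j)=0$, while $tr(I^{(n)}I^{(n)})=n$.

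Having established that $\{I^{(n)}\}\cup \mathfrak P_n$ is an orthogonal basis, I would expand $\rho=\alpha_0 I^{(n)}+\sum_{j=1}^{n^2-1}\alpha_j\sigma_j$. Taking the trace and using $tr(\rho)=1$ together with $tr(\sigma_j)=0$ forces $\alpha_0=1/n$. Multiplying by $\sigma_k$ and taking the trace, orthogonality gives $s_k(\rho)=tr(\rho\sigma_k)=\alpha_k\cdot tr(\sigma_k^2)=2\alpha_k$, hence $\alpha_k=s_k(\rho)/2$, which yields the claimed canonical form.

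The main obstacle is the bookkeeping of the orthogonality case-analysis among the three inhomogeneous families of generalized Pauli matrices, especially the normalization of the diagonal family ${}^{(n)}\sigma_3^{[k]}$; everything else is a direct consequence of expanding $\rho$ in an orthogonal basis.
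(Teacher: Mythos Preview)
Your argument is correct and is the standard proof of the generalized Bloch representation. Note, however, that the paper does not actually prove this theorem: it is stated with a citation to \cite{SM} (Schlienz--Mahler) and closed with a \qed, so there is no in-paper proof to compare against. Your orthogonal-basis approach is precisely the one underlying the cited result.

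One tiny bookkeeping slip: in the orthogonality check between ${}^{(n)}\sigma_3^{[k]}$ and ${}^{(n)}\sigma_3^{[k']}$ with $k<k'$, the relevant support of ${}^{(n)}\sigma_3^{[k]}$ is the first $k{+}1$ (not $k$) diagonal positions, on which ${}^{(n)}\sigma_3^{[k']}$ is constant and ${}^{(n)}\sigma_3^{[k]}$ has diagonal sum $k\cdot 1+(-k)=0$. The conclusion is unchanged.
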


A kind of converse of Theorem \ref{BLOCHVECT} reads: a matrix $\rho$ having the form $\rho = \frac{1}{n}I^{(n)} + \frac{1}{2}\sum_{j=1}^{n^2-1}s_j(\rho)\sigma_j $ is a density operator if an only if its eigenvalues are non-negative.

Let us consider the Hilbert space ${\cal H} = {\cal H}_a \otimes {\cal H}_b$. For any density operator $\rho$ on $\cal H$, we denote by 
$\rho_a$ the partial trace of $\rho$ with respect to the system ${\cal H}_b$ (i.e. $\rho_a = tr_{{\cal H}_b}(\rho)$) and by $\rho_b$ the partial trace of $\rho$ with respect to the system ${\cal H}_a$ (i.e. $\rho_b = tr_{{\cal H}_a}(\rho)$). For the next  developments it is useful to recall the following technical result:

\begin{lem}\label{TRACE}
Let $\rho $ be a density operator in a $n$-dimensional Hilbert space ${\cal H} = {\cal H}_a \otimes {\cal H}_b$ where $dim({\cal H}_a) = m$ and $dim({\cal H}_b) = k$.  If we divide $\rho$ in $m \times m$ blocks $B_{i,j}$, each of them is a $k$-square matrix, then:

\begin{eqnarray*}
\rho_a & = &tr_{{\cal H}_b}(\rho)  =\left[\begin{array}{cccc}
      tr B_{1,1} & tr B_{1,2} & \ldots & tr B_{1,m}  \\
      tr B_{2,1} & tr B_{2,2} & \ldots & tr B_{2,m}  \\
      \vdots & \vdots & \vdots & \vdots  \\
      tr B_{m,1} & tr B_{m,2} & \ldots & tr B_{m,m}   \\
          \end{array}\right] \\
\rho_b & = &tr_{{\cal H}_a}(\rho) =\sum_{i=1}^m B_{i,i}.
\end{eqnarray*}
\qed
\end{lem}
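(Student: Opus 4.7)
The plan is to reduce the statement to a direct computation of matrix elements in a product basis, using the characterization of the partial trace through its action on elementary tensors. Since both sides of each equation are linear in $\rho$, the identification amounts to bookkeeping with indices.

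First I would fix orthonormal bases $\{\ket{e_i}\}_{i=1}^m$ of ${\cal H}_a$ and $\{\ket{f_\alpha}\}_{\alpha=1}^k$ of ${\cal H}_b$, so that $\{\ket{e_i}\otimes\ket{f_\alpha}\}$ is the canonical basis of $\mathcal{H}$ ordered lexicographically (first by $i$, then by $\alpha$). In this ordering the matrix of $\rho$ is exactly the one partitioned into the $m\times m$ grid of $k\times k$ blocks $B_{i,j}$ of the statement, with entries $(B_{i,j})_{\alpha,\beta}=\bra{e_i}\otimes\bra{f_\alpha}\,\rho\,\ket{e_j}\otimes\ket{f_\beta}$.

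Next I would recall the defining formulas for the partial traces, namely
\begin{equation*}
(\rho_a)_{i,j}=\sum_{\alpha=1}^{k}\bra{e_i}\otimes\bra{f_\alpha}\,\rho\,\ket{e_j}\otimes\ket{f_\alpha},\qquad
(\rho_b)_{\alpha,\beta}=\sum_{i=1}^{m}\bra{e_i}\otimes\bra{f_\alpha}\,\rho\,\ket{e_i}\otimes\ket{f_\beta}.
\end{equation*}
Substituting the matrix elements of the blocks, the first sum collapses to $\sum_\alpha (B_{i,j})_{\alpha,\alpha}=\mathrm{tr}\,B_{i,j}$, which yields the stated form of $\rho_a$. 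The second sum equals $\sum_i (B_{i,i})_{\alpha,\beta}$, i.e.\ the $(\alpha,\beta)$ entry of $\sum_{i=1}^{m}B_{i,i}$, which gives the stated form of $\rho_b$.

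The only delicate point is really the indexing convention: one has to be sure that the block decomposition described in the statement corresponds to grouping basis vectors of $\mathcal{H}$ by their ${\cal H}_a$-index first, and then to recognize that tracing out ${\cal H}_b$ means summing over the inner (diagonal) index $\alpha$ inside each block, whereas tracing out ${\cal H}_a$ means summing over the outer (diagonal) block index $i$. Once this dictionary is pinned down, no further argument is needed; linearity of trace and partial trace makes the identification immediate. \qed
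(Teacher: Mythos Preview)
Your argument is correct: fixing the product basis in lexicographic order, identifying the block entries with the matrix elements $\bra{e_i\otimes f_\alpha}\rho\ket{e_j\otimes f_\beta}$, and then summing over the appropriate diagonal index is exactly the standard verification of these partial-trace formulas. The paper does not actually supply a proof of this lemma; it merely recalls the result and closes with a \qed, so there is nothing to compare your approach against.
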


\begin{definition}
{\rm Let $\rho $ be a density operator in a Hilbert space ${\cal H}_m \otimes {\cal H}_k$ such that $dim({\cal H}_m) = m$ and $dim({\cal H}_k) = k$. Then $\rho$ is said to be {\it $(m,k)$-factorizable} iff $\rho = \rho_m \otimes \rho_k$ where $\rho_m$ is a density operator in ${\cal H}_m$ and $\rho_k$ is a density operator in ${\cal H}_k$.   
}
\end{definition}

It is well known that, if $\rho$ is $(m,k)$-factorizable as $\rho = \rho_m \otimes \rho_k$, this factorization is unique and $\rho_m$ and $\rho_k$ correspond to the reduced states of $\rho$ on ${\cal H}_m$ and ${\cal H}_k$, respectively.\\

Suppose that ${\cal H} = {\cal H}_a\otimes {\cal H}_b$, where $dim({\cal H}_a) = m$ and $dim({\cal H}_b) = k$. Let us consider  the generalized Pauli matrices $\sigma_1^a,\ldots, \sigma_{m^2-1}^a $ and $\sigma_1^b,\ldots, \sigma_{k^2-1}^b $ arising from  ${\cal H}_a$ and ${\cal H}_b$, respectively. 

If we define the following coefficients: $$M_{j,l}(\rho) = tr(\rho [\sigma_j^a \otimes \sigma_l^b]) - tr(\rho [\sigma_j^a \otimes I^{(k)}])tr(\rho [I^{(m)}\otimes\sigma_l^b])$$ and if we consider the matrix ${\bf M}(\rho)$ defined as $${\bf M}(\rho)=\frac{1}{4} \sum_{j=1}^{m^2-1}\sum_{l=1}^{k^2-1} M_{j,l}(\rho)(\sigma_j^a \otimes \sigma_l^b)$$ then  ${\bf M}(\rho)$ represents the ``additional component" of $\rho$ when $\rho$ is not a factorized state. More precisely, we can establish the following proposition:

\begin{prop}\label{decomposition}{\rm\cite{SM}}
Let $\rho$ be a density operator in ${\cal H} = {\cal H}_a\otimes {\cal H}_b$. 
$$\rho = \rho_a \otimes \rho_b + {\bf M}(\rho).$$
\qed
\end{prop}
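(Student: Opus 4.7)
The plan is to reduce the claim to a straightforward bookkeeping exercise in the tensor product Bloch decomposition of $\rho$. The key observation is that the set
\[
\{\, I^{(m)}\otimes I^{(k)},\ \sigma_j^a\otimes I^{(k)},\ I^{(m)}\otimes \sigma_l^b,\ \sigma_j^a\otimes \sigma_l^b \,\}
\]
(with $1\le j\le m^2-1$, $1\le l\le k^2-1$) forms an orthogonal basis of the space of operators on ${\cal H}_a\otimes{\cal H}_b$ with respect to the Hilbert–Schmidt inner product $\langle A,B\rangle = tr(A^\dagger B)$. First I would record the Hilbert–Schmidt norms of these basis elements, namely $mk$, $2k$, $2m$, and $4$ respectively, which follow from $tr(\sigma_j^2)=2$ for the generalized Pauli matrices and multiplicativity of trace under tensor product.

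Next I would apply the Bloch-type expansion of Theorem \ref{BLOCHVECT} in this basis, writing
\[
\rho=\frac{1}{mk}I^{(m)}\otimes I^{(k)} + \frac{1}{2k}\sum_{j}a_j\,\sigma_j^a\otimes I^{(k)} + \frac{1}{2m}\sum_{l}b_l\,I^{(m)}\otimes \sigma_l^b + \frac{1}{4}\sum_{j,l}c_{j,l}\,\sigma_j^a\otimes \sigma_l^b,
\]
where the coefficients are forced by orthogonality to be $a_j=tr(\rho[\sigma_j^a\otimes I^{(k)}])$, $b_l=tr(\rho[I^{(m)}\otimes \sigma_l^b])$, and $c_{j,l}=tr(\rho[\sigma_j^a\otimes \sigma_l^b])$.

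Then I would compute the reduced states by partial tracing this expansion, using $tr_{{\cal H}_b}(A\otimes B)=A\cdot tr(B)$, together with $tr(\sigma_l^b)=0$ and $tr(I^{(k)})=k$. The cross terms involving $\sigma_l^b$ vanish, so
\[
\rho_a=\frac{1}{m}I^{(m)}+\frac{1}{2}\sum_j a_j\,\sigma_j^a,\qquad \rho_b=\frac{1}{k}I^{(k)}+\frac{1}{2}\sum_l b_l\,\sigma_l^b.
\]
Expanding the product $\rho_a\otimes \rho_b$ then reproduces exactly the first three summands of $\rho$ above, while the last block becomes $\tfrac{1}{4}\sum_{j,l}a_jb_l\,\sigma_j^a\otimes \sigma_l^b$. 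Subtracting yields
\[
\rho-\rho_a\otimes\rho_b=\frac{1}{4}\sum_{j,l}(c_{j,l}-a_jb_l)\,\sigma_j^a\otimes \sigma_l^b,
\]
and the bracketed coefficient is precisely $M_{j,l}(\rho)$ as defined in the paper, giving $\rho-\rho_a\otimes\rho_b={\bf M}(\rho)$.

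The only real subtlety I expect is justifying that the coefficients in the initial expansion are indeed given by the stated traces, which amounts to confirming the orthogonality relation $tr(\sigma_j\sigma_{j'})=2\delta_{jj'}$ for the generalized Pauli matrices of the definition; this can be verified directly on the three families (the two off-diagonal types involving $\ket{\psi_k}\bra{\psi_j}+\ket{\psi_j}\bra{\psi_k}$ and its imaginary counterpart, and the diagonal traceless family $\sigma_3^{[k]}$ with its normalization $\sqrt{2/k(k+1)}$). Alternatively one can bypass this by taking the tensor-product Bloch representation guaranteed by Theorem \ref{BLOCHVECT} (applied to the full space ${\cal H}_a\otimes{\cal H}_b$ after regrouping its Pauli basis) as a starting point. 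Either way, the remainder of the argument is pure trace arithmetic and presents no further difficulty.
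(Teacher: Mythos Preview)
Your argument is correct and is precisely the standard Schlienz--Mahler derivation: expand $\rho$ in the orthogonal Hilbert--Schmidt basis $\{I\otimes I,\ \sigma_j^a\otimes I,\ I\otimes\sigma_l^b,\ \sigma_j^a\otimes\sigma_l^b\}$, identify the marginals $\rho_a$, $\rho_b$ via partial trace, and subtract. The coefficient bookkeeping you wrote out checks (in particular the normalizations $1/(mk)$, $1/(2k)$, $1/(2m)$, $1/4$ are right, and $c_{j,l}-a_jb_l$ is exactly $M_{j,l}(\rho)$).

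Note, however, that the paper does not actually supply a proof of this proposition: it is stated with a citation to \cite{SM} and closed immediately with a \qed. So there is no ``paper's own proof'' to compare against beyond the reference; what you have written is essentially a self-contained version of the argument in \cite{SM}, and it is sound. The only caveat worth flagging is the one you already identified: the orthonormality $tr(\sigma_j\sigma_{j'})=2\delta_{jj'}$ for the generalized Pauli matrices must be checked (or taken from the literature), since Theorem~\ref{BLOCHVECT} as stated in the paper gives the expansion of $\rho$ on a single $n$-dimensional space rather than directly in tensor-product form.
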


The above proposition gives a formal representation of the instance of holism mentioned at the beginning of the section. In fact, a state $\rho$ in  
${\cal H}_a\otimes {\cal H}_b$ does not only depend on its reduced states $\rho_a$ and $\rho_b$, but also the summand ${\bf M}(\rho)$ is involved. Let us notice that ${\bf M}(\rho)$ is not a density operator and then it does not represent a physical state.

\section{Quantum computation with mixed states}\label{QLMS}

As already mentioned, a qubit is a pure state in the Hilbert space
${\mathbb{C}}^2$. The standard orthonormal basis $\{ \vert 0 \rangle ,
\vert 1 \rangle \}$ of ${\mathbb{C}}^2$, where $\vert 0 \rangle = (1,0)^\dagger$
and $\vert 1 \rangle = (0,1)^\dagger$, is generally called
\textit{logical basis}. This name refers to the fact that the logical truth is
related to $\vert 1 \rangle$ and the falsity to $\vert 0 \rangle$. Thus,
pure states $\vert \psi \rangle$ in ${\mathbb{C}}^2$ are superpositions of the basis vectors $\vert
\psi \rangle = c_0\vert 0 \rangle + c_1 \vert 1 \rangle$, where $c_0$ and $c_1$ are complex numbers such that $\vert
c_0 \vert^2 + \vert c_1 \vert^2 = 1$. 
Recalling the Born rule, any qubit $\vert \psi \rangle = c_0\vert 0
\rangle + c_1 \vert 1 \rangle$ may be regarded as a piece of
information, where the numbers $\vert c_0 \vert ^2$ and $\vert c_1 \vert ^2$ correspond to
the probability-values associated to the information described by the basic
states $\vert 0 \rangle$ and $\vert 1 \rangle$, respectively.  Hence, we confine our interesting to the probability value  $p(\vert \psi
\rangle) = \vert c_1 \vert ^2$, that is related to the basis vector
associated with the logical truth.

Arbitrary quantum computational states live in ${\otimes^n} {\mathbb{C}}^2$. A special basis,   called the
$2^n$-{\it computational basis}, is chosen for ${\otimes^n}
{\mathbb{C}}^2$. More precisely, it consists of the $2^n$ orthogonal
states $\vert \iota \rangle$, $0 \leq \iota \leq 2^n$ where $\iota$
is in binary representation and $\vert \iota \rangle$ can be seen as
tensor product of states (Kronecker product) $\vert \iota \rangle =
\vert \iota_1 \rangle \otimes \vert \iota_2 \rangle \otimes \ldots
\otimes \vert \iota_n \rangle$, with $\iota_j \in \{0,1\}$. Then, a pure
state $\vert \psi \rangle \in {\otimes^n}{\mathbb{C}}^2$ is  a
superposition of the basis vectors $\vert \psi \rangle = \sum_{\iota
= 1}^{2^n} c_{\iota}\vert \iota \rangle$, with $\sum_{\iota =
1}^{2^n} \vert c_{\iota} \vert^2 = 1$.

As already mentioned, in the usual representation of quantum
computational processes, a quantum circuit is identified with an
appropriate composition of {\it quantum gates}, mathematically
represented by \textit{unitary operators} acting on pure states of a
convenient ($n$-fold tensor product) Hilbert space ${\otimes^n}
{\mathbb{C}}^2$ \cite{NIC}. 

In what follows we give a short description of the model of quantum computers with mixed states.

We associate to each vector of
the logical basis of ${\mathbb{C}}^2$ two density operators $P_0 = 
\vert 0 \rangle \langle 0 \vert$ and $P_1 = \vert 1 \rangle \langle 1
\vert$ that represent, in this framework, the falsity-property and the truth-property,
respectively.  Let us consider the operator $P_1^{(n)} =
\otimes^{n-1} I \otimes P_1$ on ${\otimes^n}{\mathbb{C}}^2$. 
By
applying the Born rule, we consider the probability of a density
operator $\rho$ as follows:
\begin{equation} \label{PROBDEF}
  p(\rho) = Tr(P_1^{(n)} \rho).
\end{equation}
Note that, in the particular case in which $\rho = \vert \psi \rangle
\langle \psi \vert$, where $\vert \psi \rangle = c_0\vert 0 \rangle +
c_1 \vert 1 \rangle$, we obtain that $p(\rho)= \vert c_1 \vert
^2$. Thus, this probability value associated to $\rho$ is the
generalization of the probability value considered for
qubits.
In the model of quantum computation with mixed states, the role of quantum gates is replaced by quantum operations. A \textit{quantum operation} \cite{K} is a linear operator ${\cal
  E}:{\cal L}(H_1)\rightarrow {\cal L}(H_2)$ - where ${\cal L}(H_i)$ is
the space of linear operators in the complex Hilbert space $H_i$ ($i=
1, 2$) - representable (following the first Kraus representation theorem) as ${\cal E}(\rho)=\sum_{i}A_{i}\rho
A_{i}^{\dagger }$, where $A_i$ are operators satisfying
$\sum_{i}A_{i}^{\dagger }A_{i}=I$. It can be
seen that a quantum operation maps density operators into density
operators. Each unitary operator $U$ has a natural correspondent quantum operation ${\cal O}_ U$ such that, for each density operator $\rho$, ${\cal O}_U(\rho) =
U\rho U^{\dagger}$. In this way, quantum operations are generalizations of unitary operators.
It provides a powerfull mathematical model where also irreversible  porcesses can be considered.

\section{$\CNOT$ quantum operation as fuzzy connective}\label{SECFUZZYCON}
As in classical case, also in quantum computation it is useful to implement some kind of \textquotedblleft if-then-else\textquotedblright operations. More precisely, it means that we have to consider the evolution of a set of qubits depending upon the values of some other set of
qubits. The gates that implement these kind of operations are called \textquotedblleft controlled gates\textquotedblright. The controlled gates we are interested on, is the {\it controlled-NOT gate} ($\CNOT$, for short). An usual application of the $\CNOT$ gate is to generate entangle states, starting from factorizable ones. This is a crucial step for quantum teleportation protocol and quantum cryptography.

The $\CNOT$ gate, takes two qbits as input, a control qbit and a target qbit, and performs the
following operation: 
\begin{itemize}
\item if the control qbit is $\vert 0 \rangle$, then $\CNOT$ behaves as the identity 
\item if the control bit is $\vert 1 \rangle$, then the target bit is 
flipped. 
\end{itemize}

Thus, $\CNOT$ is given by the unitary transformation $$\ket i\ket j\mapsto\ket i\ket {i\widehat{+}j}$$ where $i, j\in \{0,1\}$ and  $\widehat{+}$ is the sum modulo $2$. Note that, confining in the computational basis only, the behaviour of $\CNOT$ replaces the classical XOR connective. The matrix representation of $\CNOT$ is given by:  
\begin{equation} \label{matrixcnot}
\CNOT = \left(\begin{array}{cccc}
      1 & 0 & 0 & 0  \\
      0& 1 & 0 & 0  \\
      0 & 0 & 0 & 1  \\
      0& 0 & 1 & 0  \\
          \end{array}\right).
\end{equation}

Since $\CNOT$ is a unitary matrix, it naturally admits an extension as quantum operation. Noting that $\CNOT^\dagger = \CNOT$, its extension as quantum operation is given by: 
  
\begin{equation} \label{matrixqcnot}
\QCNOT(\rho\otimes\sigma) = \CNOT \hspace{0.1cm} (\rho\otimes\sigma) \hspace{0.1cm} \CNOT.  
\end{equation}

\begin{theo}\label{NotFact}
Let $\rho$, $\sigma$ be two density operators in ${\mathbb{C}^2}$. Then: $$p(\QCNOT(\rho\otimes\sigma)) =  (1-p(\rho))p(\sigma)+ (1-p(\sigma)) p(\rho). $$
\end{theo}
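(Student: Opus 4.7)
The plan is to exploit the operator-algebraic decomposition $\CNOT = P_0 \otimes I + P_1 \otimes \sigma_1$, where $\sigma_1$ is the Pauli flip (NOT) and $P_0, P_1$ are the one-qubit projectors introduced in Section~2. This decomposition is nothing but the operator form of the ``if-then-else'' behaviour described just before the statement, and it makes the conjugation by $\CNOT$ completely transparent.

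First I would substitute the decomposition into $\QCNOT(\rho\otimes\sigma)=\CNOT(\rho\otimes\sigma)\CNOT$ (using $\CNOT^{\dagger}=\CNOT$) and expand the product to obtain $\QCNOT(\rho\otimes\sigma) = \sum_{i,j\in\{0,1\}} (P_i\rho P_j)\otimes A_{ij}$, where $A_{00}=\sigma$, $A_{01}=\sigma\sigma_1$, $A_{10}=\sigma_1\sigma$, and $A_{11}=\sigma_1\sigma\sigma_1$. This is a finite, controlled expansion: four tensor summands indexed by the four choices of branches for the control qubit.

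Second, I would apply the definition $p(\tau)=Tr((I\otimes P_1)\tau)$ term by term, using $Tr((A\otimes B)(C\otimes D))=Tr(AC)Tr(BD)$ together with the cyclic property of the trace. The two cross terms ($i\neq j$) vanish, because $Tr(P_0\rho P_1)=Tr(P_1P_0\rho)=0$ and likewise $Tr(P_1\rho P_0)=0$. For the two diagonal terms I would use $Tr(P_0\rho)=1-p(\rho)$, $Tr(P_1\rho)=p(\rho)$, $Tr(P_1\sigma)=p(\sigma)$, together with the key identity $\sigma_1 P_1 \sigma_1 = P_0$, which by cyclicity yields $Tr(P_1\sigma_1\sigma\sigma_1)=Tr(P_0\sigma)=1-p(\sigma)$.

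Adding the two surviving summands gives precisely $(1-p(\rho))\,p(\sigma)+p(\rho)\,(1-p(\sigma))$, which is the desired identity. No real obstacle arises: the only point of care is keeping the positional convention consistent, namely that the target qubit of $\CNOT$ is the second tensor factor, which is precisely the factor probed by $P_1^{(2)}=I\otimes P_1$. Once this is pinned down, the whole argument reduces to a short, purely operator-algebraic trace manipulation; no Bloch parametrisation or matrix bookkeeping is needed.
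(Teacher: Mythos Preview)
Your argument is correct. The paper takes a different, more bare-hands route: it writes $\rho$ and $\sigma$ as explicit $2\times 2$ matrices with $p(\rho)=a$, $p(\sigma)=b$, lists the diagonal of $\rho\otimes\sigma$ as $(1-a)(1-b),\,(1-a)b,\,a(1-b),\,ab$, observes that conjugation by $\CNOT$ merely swaps the third and fourth diagonal entries, and then reads off $p(\QCNOT(\rho\otimes\sigma))=d'_{22}+d'_{44}=(1-a)b+a(1-b)$. Your approach via the controlled decomposition $\CNOT=P_0\otimes I+P_1\otimes\sigma_1$ is coordinate-free: the orthogonality $P_0P_1=0$ kills the cross terms and the conjugation identity $\sigma_1 P_1\sigma_1=P_0$ explains structurally why the two surviving summands are $(1-p(\rho))p(\sigma)$ and $p(\rho)(1-p(\sigma))$. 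The paper's computation is marginally shorter and needs nothing beyond the basis-permutation action of $\CNOT$; yours avoids any matrix bookkeeping and generalises immediately to arbitrary controlled-$U$ gates, where the same steps give $p=(1-p(\rho))\,Tr(P_1\sigma)+p(\rho)\,Tr(U^{\dagger}P_1U\,\sigma)$.
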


\begin{proof}
Let 
\begin{equation*}
  \rho =
  \left(\begin{array}{cc}
    1-a & r \\
    r^* & a
  \end{array}\right) \hspace{0.2cm} and
  \quad
  \sigma =
  \left(\begin{array}{cc}
    1-b & t \\
    t^* & b
  \end{array}\right)
\end{equation*}
be density operators in $\mathbb C^2$. It is easy to check that the diagonal elements of $\rho\otimes\sigma$ are $d_{11}=(1-a)(1-b)$, $d_{22}=(1-a)b$, $d_{33}=a(1-b)$ and $d_{44}=ab$.
Similarly, the diagonal elements of $\QCNOT(\rho\otimes\sigma)$ are: $d'_{11}=d_{11}$, $d'_{22}=d_{22}$ and $d'_{33}=d'_{44}$ and $d'_{44}=d_{33}$.  Thus
\begin{eqnarray*}
p(\QCNOT(\rho\otimes\sigma)) &=& d'_{22} + d'_{44} \\
&=& (1-a)b+b(1-a) \\
&=& (1-p(\rho))p(\sigma)+ (1-p(\sigma)) p(\rho).
\end{eqnarray*}

\qed
\end{proof}

The above theorem allows us to consider $\QCNOT$ as a fuzzy connective in accord to the probability value $p(\QCNOT(-\otimes -))$. 

In fact: let $x,y \in [0,1]$; the usual product operation $x \cdot y$ in the unitary real interval defines the conjunction in the fuzzy logical system called {\it Product Logic} \cite{CT}. The operations $\neg_{\L} x = 1-x$ and $x\oplus y = \min\{x+y,1\}$ define the negation and the disjunction of the infinite value \L ukasiewicz calculus respectively \cite{CDM}. The operations  $\langle \cdot, \oplus, \neg_{\L} \rangle $ endow the interval $[0,1]$ of an algebraic structure known as {\it Product $MV$-algebra} ($PMV$-algebra, for short) \cite{DD,MONT2}. In this case the $PMV$-algebra $\langle [0,1] \cdot, \oplus, \neg_{\L} \rangle $ is the standard model of the a fuzzy logic system, called {\it Product Many Valued Logic}. 

If $\rho$, $\sigma$ are two density operators in ${\mathbb{C}^2}$, then $(1-p(\rho))p(\sigma)+p(\rho)(1-p(\sigma)) \leq 1$. Thus, $p(\QCNOT(\rho\otimes\sigma))$ can be expressed in terms of $PMV$-operations. More precisely:   
\begin{eqnarray*}
p(\QCNOT(\rho\otimes\sigma)) &=& (1-p(\rho))p(\sigma)+ (1-p(\sigma)) p(\rho) \\
&=& (\neg_{\L}p(\rho) \cdot p(\sigma)) \oplus (\neg_{\L}p(\sigma) \cdot p(\rho)). 
\end{eqnarray*}

In this way $\QCNOT$ can be relate to the fuzzy connective given by the $PMV$-polynomial term $(\neg_{\L} x \cdot y) \oplus (\neg_{\L} y \cdot x)$, establishing a link between $\QCNOT$ and a fuzzy logic system. Let us notice that there are other quantum gates admitting a similar fuzzy representation \cite{FS, FSA}.\\

In Figure \ref{pcnot} we show the behavior of $p(\QCNOT(-\otimes-)$ as a fuzzy connective.

\begin{figure}\caption{$p(\QCNOT(\rho\otimes\sigma))$}\label{pcnot}
\centering
\includegraphics[scale=0.4]{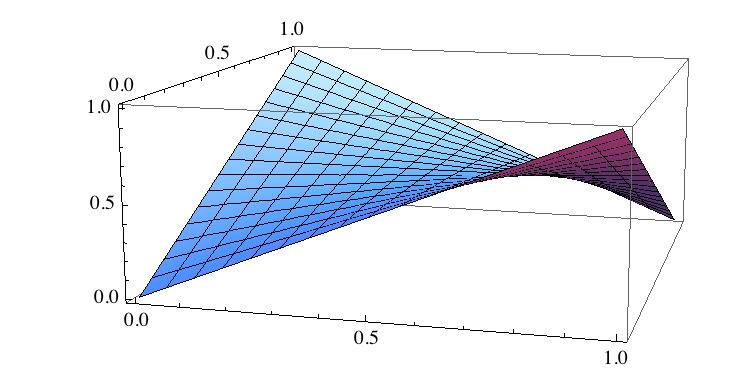}

\end{figure}

\section{$\QCNOT$ on general density operators}

In the precedent section we have introduced the behaviour of the $\QCNOT$ gate on factorized states of the form $\rho\otimes\sigma$. For a more general approach, we now assume that the input state can be any arbitrary mixed state $\rho$ in $\otimes^2{\mathbb C}^2$.   
We remark how this kind of studies suggests a holistic form of quantum logics \cite{BDGLS, SDGLL}. Let $\rho$ be a density operator in $\otimes^2{\mathbb C}^2$ and $\rho_1, \rho_2$ the reduced states of $\rho$. Since $\QCNOT$ is linear, by Proposition \ref{decomposition}, we have that 

\begin{equation}\label{SUMDEC} 
\QCNOT(\rho) = \QCNOT(\rho_1 \otimes \rho_2 ) + \CNOT({\bf M}(\rho))\CNOT.
\end{equation}

The summand $\QCNOT(\rho_1 \otimes \rho_2 )$ will be called the {\it fuzzy component} of $\QCNOT(\rho)$ and we denote by ${\cal C}(\rho)$ the quantity  $\CNOT({\bf M}(\rho))\CNOT$.

\begin{theo}\label{ENTCNOT}
Let $\rho$ be a density operator in $\mathbb C^4$ such that 
$$\rho=(r_{ij})_{1\leq i,j \leq 2^2} =\left(\begin{array}{cccc}
      r_{11} & r_{12} & r_{13} & r_{14}  \\
      r_{21}& r_{22} & r_{23} & r_{24}  \\
      r_{31} & r_{32} & r_{33} & r_{34}  \\
      r_{41}& r_{42} & r_{43} & r_{44}  \\
          \end{array}\right). 
$$
Then

\begin{enumerate}
\item
$p(\QCNOT(\rho)) = r_{22}+r_{33}$,

\item
$p(\QCNOT(\rho_1\otimes\rho_2)) = (r_{11} + r_{22})(r_{22} + r_{44}) + (r_{11} + r_{33})(r_{33} + r_{44})$,

\item
$-1 \leq Tr(P_1 {\cal C}(\rho)) = 2(r_{22}r_{33} - r_{11}r_{44}) \leq 1$, 

\item
$Tr(P_1 {\cal C}(\rho)) = \frac{1}{2}$ iff, $r_{11} = r_{44} = 0$ and $r_{22} = r_{33} = \frac{1}{2}$, 

\item
$Tr(P_1 {\cal C}(\rho)) = - \frac{1}{2}$ iff, $r_{11} = r_{44} = \frac{1}{2}$ and $r_{22} = r_{33} = 0$.

\end{enumerate}

\end{theo}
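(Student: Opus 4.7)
The plan is to treat items 1 and 2 by direct computation using the $4\times 4$ matrix form, items 3--5 by combining them via the holistic decomposition (\ref{SUMDEC}) and some light algebra exploiting $\operatorname{tr}(\rho)=1$.

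For item 1, I would use the explicit form (\ref{matrixcnot}) of $\CNOT$: since $\CNOT$ is the permutation swapping the basis vectors $\ket{10}$ and $\ket{11}$, conjugation $\CNOT\,\rho\,\CNOT$ permutes the third and fourth rows and the third and fourth columns of $\rho$. In particular, the diagonal of $\QCNOT(\rho)$ is $(r_{11},r_{22},r_{44},r_{33})$. Since $P_1^{(2)}=I\otimes P_1=\operatorname{diag}(0,1,0,1)$, the probability $p(\QCNOT(\rho))=\operatorname{Tr}(P_1^{(2)}\QCNOT(\rho))$ picks out the second and fourth diagonal entries, giving $r_{22}+r_{33}$.

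For item 2, I would apply Lemma \ref{TRACE} with $m=k=2$ to the block decomposition of $\rho$. The traces of the four $2\times 2$ blocks are exactly the pairs $(r_{11}{+}r_{22}, r_{13}{+}r_{24}, r_{31}{+}r_{42}, r_{33}{+}r_{44})$ and the sum $B_{11}+B_{22}$ yields the entries $(r_{11}{+}r_{33}, \cdot, \cdot, r_{22}{+}r_{44})$ on the diagonal. Reading off the $\ket1\bra1$-component, one gets $p(\rho_1)=r_{33}+r_{44}$ and $p(\rho_2)=r_{22}+r_{44}$; using $\operatorname{Tr}(\rho)=1$, the complements are $1-p(\rho_1)=r_{11}+r_{22}$ and $1-p(\rho_2)=r_{11}+r_{33}$. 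Plugging into Theorem \ref{NotFact} immediately produces the stated expression.

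For item 3, I would invoke (\ref{SUMDEC}) and linearity of the trace to write
\[
\operatorname{Tr}(P_1^{(2)}\mathcal C(\rho)) \;=\; p(\QCNOT(\rho)) \;-\; p(\QCNOT(\rho_1\otimes\rho_2)),
\]
then substitute the expressions from items 1 and 2. Writing $a=r_{11},b=r_{22},c=r_{33},d=r_{44}$ with $a+b+c+d=1$, the bracketed sum expands as $b^2+c^2+(a+d)(b+c)+2ad$; subtracting from $b+c$ and using $(b+c)(1-a-d)=(b+c)^2$ collapses the expression to $2(bc-ad)=2(r_{22}r_{33}-r_{11}r_{44})$. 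The bounds $|\cdot|\le 1$ then follow from $r_{ii}\in[0,1]$ and $r_{22}r_{33},\,r_{11}r_{44}\in[0,1/4]$.

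Items 4 and 5 are equality-case analyses of the bound just derived. Since $r_{22}+r_{33}\le 1$, the AM--GM inequality yields $r_{22}r_{33}\le 1/4$ with equality iff $r_{22}=r_{33}=1/2$; and $r_{11}r_{44}\ge 0$ with equality iff at least one of $r_{11},r_{44}$ vanishes. The identity $2(r_{22}r_{33}-r_{11}r_{44})=1/2$ forces both extremes simultaneously, and $r_{22}+r_{33}=1$ together with $\operatorname{Tr}(\rho)=1$ compels $r_{11}=r_{44}=0$. Item 5 is the mirror argument with the roles of the two diagonal pairs exchanged. The only slightly delicate step is the algebraic collapse in item 3, but once one notices that $b+d=1-a-c$ and $c+d=1-a-b$ simplify the products, the computation is short and all subsequent items follow cleanly.
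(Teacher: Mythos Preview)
Your proposal is correct and follows essentially the same route as the paper: the same permutation argument for item~1, Lemma~\ref{TRACE} plus Theorem~\ref{NotFact} for item~2, and the decomposition (\ref{SUMDEC}) to reduce item~3 to the difference of the first two, with the same algebraic collapse to $2(r_{22}r_{33}-r_{11}r_{44})$. Your AM--GM treatment of the extremal cases in items~4--5 is a slightly cleaner packaging of the paper's case analysis, but the substance is identical.
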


\begin{proof} 1) It is immediate to see that $diag(\QCNOT(\rho)) = (r_{11}, r_{22}, r_{44}, r_{33}) $. Then $p(\QCNOT(\rho)) = tr(P_1 \QCNOT(\rho)) = r_{22}+r_{33}$.

2) By Lemma \ref{TRACE} we have that
$$
\rho_1 =
  \left(\begin{array}{cc}
    r_{11} + r_{22} & r_{13} + r_{24} \\
    r_{31} + r_{42} & r_{33} + r_{44}
  \end{array}\right) \hspace{0.1cm} and \hspace{0.1cm}
  \rho_2 =
  \left(\begin{array}{cc}
    r_{11} + r_{33} & r_{12} + r_{34} \\
    r_{21} + r_{43} & r_{22} + r_{44}
  \end{array}\right).
$$

By Theorem \ref{NotFact} we have that 
\begin{eqnarray*}
p(\QCNOT(\rho_1\otimes \rho_2)) &=& (1-p(\rho_1))p(\rho_2)+ (1-p(\rho_2)) p(\rho_1) \\
&=& (1- (r_{33} + r_{44})(r_{22} + r_{44}) + (1- r_{22} + r_{44})(r_{33} + r_{44})\\ 
&=& (r_{11} + r_{22})(r_{22} + r_{44}) + (r_{11} + r_{33})(r_{33} + r_{44}).
\end{eqnarray*}

3,4,5) By Proposition \ref{decomposition}, 
\begin{eqnarray*}
Tr(P_1 {\cal C}(\rho)) &=&  p(\QCNOT(\rho)) -  p(\QCNOT(\rho_1\otimes\rho_2)) \\
&=& r_{22}+r_{33} - (r_{11} + r_{22})(r_{22} + r_{44}) - (r_{11} + r_{33})(r_{33} + r_{44})\\ 
&=& 2(r_{22}r_{33} - r_{11}r_{44}).
\end{eqnarray*}

Note that $Tr(P_1 {\cal C}(\rho))$ assumes the maximum value when $r_{11}r_{44} = 0$. If $r_{11} = 0$ then $1 = r_{2,2} + r_{3,3} + r_{4,4}$ and the maximum of $r_{2,2} r_{3,3}$ occurs when $r_{44} = 0$. It implies that $r_{2,2} + r_{3,3} = 1$. Thus $\max\{r_{2,2} r_{3,3}\}$ occurs when $r_{2,2} = r_{3,3} = \frac{1}{2}$. In this way, $\max\{Tr(P_1 {\cal C}(\rho))\}$ occurs when $r_{11} = r_{44} = 0$ and $r_{22} = r_{33} = \frac{1}{2}$ and $\max\{Tr(P_1 {\cal C}(\rho))\} = \frac{1}{2}$. With a similar argument we prove that $\min\{Tr(P_1 {\cal C}(\rho))\} = -\frac{1}{2}$ and it occurs when $r_{11} = r_{44} = \frac{1}{2}$ and $r_{22} = r_{33} = 0$.

\qed
\end{proof}

\begin{example}
{\rm Werner states provide an interesting example to show the behavior of $\QCNOT$ on a non-factorized states. Werner states, originally introduced in \cite{WER} for two particles to distinguish between classical correlation and the Bell inequality satisfaction, have many interests for their applications in quantum information theory. Examples of this, are entanglement teleportation via Werner states \cite{LEEKIM}, the study of deterministic purification \cite{SHORT}, etc. Werner states  in $\otimes^2\mathbb C^2$ are generally represented by the following expression:
$$\rho_w(\alpha) = \frac{1}{4} \left(\begin{array}{cccc}
      1- \alpha & 0 & 0 & 0  \\
      0 & 1 + \alpha & -2 \alpha & 0  \\
      0 & -2 \alpha & 1+ \alpha & 0  \\
      0& 0 & 0 & 1- \alpha  \\
          \end{array}\right). 
$$
where $\alpha \in [0,1]$. By Theorem \ref{ENTCNOT} we have that:

\begin{enumerate}
\item[]
$p(\QCNOT(\rho_w(\alpha))) = \frac{1+\alpha}{2}$,

\item[]
$p(\QCNOT({\rho_w(\alpha)}_1 \otimes {\rho_w(\alpha)}_2 )) = \frac{1}{2}$,

\item[]
$Tr(P_1 {\cal C}(\rho_w(\alpha)))) = \frac{\alpha}{2}$.
\end{enumerate}

Note that, for each $\alpha \in [0,1]$ the probability value of the fuzzy component $p(\QCNOT({\rho_w(\alpha)}_1 \otimes {\rho_w(\alpha)}_2 ))$ does not change. Thus, the variation of probability value $p(\QCNOT(\rho_w(\alpha)))$ is ruled by the variation of $Tr(P_1 {\cal C}(\rho_w(\alpha))))$.  The Figure \ref{Xor} shows the incidence of $Tr(P_1 {\cal C}(\rho_w(\alpha))))$ in the probability value $p(\QCNOT(\rho_w(\alpha)))$  when the parameter $w$ varies. 
}
\end{example}

\begin{figure}\caption{Incidence of $Tr(P_1 {\cal C}(\rho_w(\alpha))))$}\label{Xor}
\centering
\includegraphics[scale=0.4]{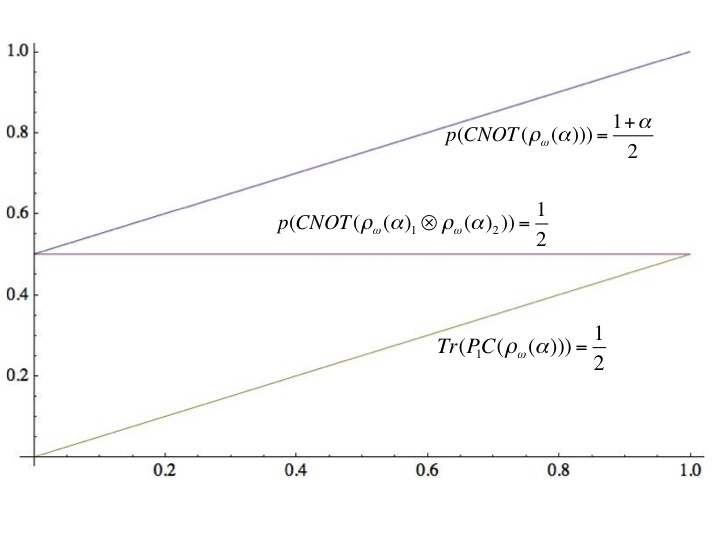}

\end{figure}







\section{Preservation of factorizability by $\QCNOT$}

As we noted at the beginning of Section \ref{SECFUZZYCON}, $\CNOT$ allows us to entangled factorized states. As an example, if the control qbit is in a superposition state $\vert \psi \rangle = \alpha \vert 0 \rangle + \beta \vert 1 \rangle$  ($\alpha, \beta \not = 0$) and the target is $\vert 0 \rangle$, then $\CNOT$ generates the entangled state 
$$(\alpha \vert 0 \rangle + \beta \vert 1 \rangle) \otimes \vert 0 \rangle \mapsto \alpha (\vert 0 \rangle \otimes \vert 0 \rangle) + \beta  (\vert 1 \rangle \otimes \vert 1 \rangle). $$

In this section we shall study a generalization of this situation. More precisely, we characterize the input $\rho \otimes \sigma$ for which $\QCNOT$ generates a non-factorizable state.

\begin{theo}
Let $\rho$ and $\sigma$ be two density operators in ${\mathbb C}^2$. Then $\QCNOT(\rho\otimes \sigma)$ is factorizable iff  one of the following two conditions holds: 

\begin{enumerate}

\item
$\rho = 
  \left(\begin{array}{cc}
    a_1 & 0 \\
    0 & 1-a_1
  \end{array}\right)$ and $\sigma =
  \left(\begin{array}{cc}
    \frac{1}{2} & b \\
    b & \frac{1}{2}
  \end{array}\right)$,

\item
$\rho = P_i $ where $i \in \{0,1\}$, 

\item
$\sigma = \frac{1}{2}
  \left(\begin{array}{cc}
    1 & \pm 1 \\
    \pm 1 & 1
  \end{array}\right)$.

\end{enumerate}

\end{theo}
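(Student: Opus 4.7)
The plan is to work in the Bloch representation of both inputs, which converts the factorizability question into a small polynomial system whose solutions can be read off directly. Write
\[
\rho = \tfrac{1}{2}(I + u_1\sigma_1 + u_2\sigma_2 + u_3\sigma_3), \qquad \sigma = \tfrac{1}{2}(I + v_1\sigma_1 + v_2\sigma_2 + v_3\sigma_3),
\]
with $\vec u,\vec v$ ranging over the closed unit ball of $\mathbb R^3$, and expand $\rho\otimes\sigma$ in the Pauli product basis of $\otimes^2\mathbb C^2$. Applying $\QCNOT$ uses the conjugation table
\begin{align*}
\CNOT(\sigma_1\otimes I)\CNOT &= \sigma_1\otimes\sigma_1, & \CNOT(I\otimes\sigma_1)\CNOT &= I\otimes\sigma_1, \\
\CNOT(\sigma_2\otimes I)\CNOT &= \sigma_2\otimes\sigma_1, & \CNOT(I\otimes\sigma_2)\CNOT &= \sigma_3\otimes\sigma_2, \\
\CNOT(\sigma_3\otimes I)\CNOT &= \sigma_3\otimes I, & \CNOT(I\otimes\sigma_3)\CNOT &= \sigma_3\otimes\sigma_3,
\end{align*}
together with the fact that conjugation by $\CNOT$ is an algebra homomorphism, so the remaining nine transformations $\CNOT(\sigma_i\otimes\sigma_j)\CNOT$ for $i,j\in\{1,2,3\}$ come from multiplication. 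This yields $\tau:=\QCNOT(\rho\otimes\sigma)$ as an explicit linear combination of Pauli products, from which the Bloch components $a_i, b_j$ of the reduced states $\tau_1,\tau_2$ and the nine correlation coefficients $c_{ij}$ are read off.

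By Proposition~\ref{decomposition}, $\tau$ is factorizable iff $\mathbf{M}(\tau)=0$, i.e.\ iff $c_{ij}=a_ib_j$ for all $i,j\in\{1,2,3\}$. Working this out, the $(3,1)$ equation is an identity, and the four ``cross'' equations (of the form $u_2v_3=u_1u_3v_1v_2$, etc.) turn out to be consequences of the four ``product'' equations
\[
u_1(1-v_1^2)=0,\quad u_2(1-v_1^2)=0,\quad v_2(1-u_3^2)=0,\quad v_3(1-u_3^2)=0.
\]

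The classification then follows by case analysis on whether the factors $1-v_1^2$ and $1-u_3^2$ vanish. If $v_1^2=1$, the Bloch-ball constraint $v_1^2+v_2^2+v_3^2\leq 1$ forces $v_2=v_3=0$, so $\sigma=\tfrac{1}{2}(I\pm\sigma_1)$ with $\rho$ arbitrary (case~3). If $u_3^2=1$, symmetrically $u_1=u_2=0$, so $\rho\in\{P_0,P_1\}$ with $\sigma$ arbitrary (case~2). Otherwise both factors are nonzero, forcing $u_1=u_2=0$ and $v_2=v_3=0$, so $\rho$ is diagonal with entries $a_1, 1-a_1$ and $\sigma$ has equal $\tfrac{1}{2}$ diagonals with real off-diagonal $b=v_1/2$ (case~1). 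The converse is immediate: in each of the three scenarios, direct substitution makes all nine equations hold.

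The main obstacle is the bookkeeping in the Pauli conjugation table, particularly the sign flips arising from products involving $\sigma_2$ (for instance $\CNOT(\sigma_2\otimes\sigma_2)\CNOT=-\sigma_1\otimes\sigma_3$); computing all sixteen transformed Pauli monomials correctly is the only computationally delicate step. A secondary subtle point is verifying that the four cross-equations are genuinely redundant given the product equations: this is cleanest to establish by noting that the product equations partition the parameter space into three regimes ($v_1^2=1$, $u_3^2=1$, or $u_1=u_2=v_2=v_3=0$), and in every regime each cross-equation has a factor on both sides that vanishes.
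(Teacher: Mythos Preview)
Your proof is correct and takes a genuinely different route from the paper. The paper parametrizes $\rho$ and $\sigma$ by their matrix entries $(a_1,a,b_1,b)$, computes the $4\times4$ matrix ${\bf M}(\QCNOT(\rho\otimes\sigma))$ entry by entry, and then runs a nested case analysis starting from the diagonal equation $a_1(1-a_1)(1-2b_1)=0$, branching on $b_1=\tfrac12$, $a_1=0$, $a_1=1$ and then on further entries. Your approach instead exploits the algebraic action of $\CNOT$ on the Pauli basis: because conjugation by $\CNOT$ permutes (up to sign) the sixteen Pauli monomials, the Bloch data $(a_i,b_j,c_{ij})$ of $\tau$ are monomials in $(u_i,v_j)$, and the factorizability condition $c_{ij}=a_ib_j$ collapses to the four equations $u_1(1-v_1^2)=u_2(1-v_1^2)=v_2(1-u_3^2)=v_3(1-u_3^2)=0$, with the remaining equations automatically satisfied once the Bloch-ball constraint is used. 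This buys you a much cleaner case split (three mutually exhaustive regimes rather than a tree of subcases) and makes it transparent why exactly the three families in the statement arise: they correspond to $v_1=\pm1$, $u_3=\pm1$, and $(u_1,u_2,v_2,v_3)=0$, respectively. The paper's approach, by contrast, is self-contained at the level of matrix entries and does not rely on knowing the $\CNOT$ conjugation table, so it is more elementary but also more laborious.
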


\begin{proof}
Consider the following two generic density operators in ${\mathbb C}^2$  
$$
\rho =
  \left(\begin{array}{cc}
    a_1 & a \\
    a^* & 1- a_1
  \end{array}\right) \hspace{0.2cm} and
  \quad
  \sigma =
  \left(\begin{array}{cc}
    b_1 & b \\
    b^* & 1- b_1
  \end{array}\right).
$$
By Proposition \ref{decomposition}, $\QCNOT(\rho\otimes \sigma)$ has the following form: $$\QCNOT(\rho\otimes \sigma) = \QCNOT(\rho\otimes \sigma)_1 \otimes \QCNOT(\rho\otimes \sigma)_2 + {\bf M}(\QCNOT(\rho\otimes \sigma)).$$ Thus we have to establish conditions on  $\rho$, $\sigma$ such that ${\bf M}(\QCNOT(\rho\otimes \sigma)) = 0$. Since ${\bf M}(\QCNOT(\rho\otimes \sigma)) = \QCNOT(\rho\otimes \sigma) - \QCNOT(\rho\otimes \sigma)_1 \otimes \QCNOT(\rho\otimes \sigma)_2$, is straightforward to see that $${\bf M}(\QCNOT(\rho\otimes \sigma)) = (x_{i,j})_{1\leq i,j \leq 4}$$ where 

\begin{enumerate}
\item[11)] $x_{11} = a_1(1-a_1)(1-2b_1) = -x_{22} = -x_{33} = x_{44}$,

\item[12)] $x_{12} = -2i a_1(a_1 - 1)Im(b)$,

\item[13)] $x_{13} = -a(b^* + 2Re(b)(a_1(2b_1 - 1) - b_1)  )$,

\item[14)] $x_{14} = a(b_1 - 2Re(b)(b^* + 2ia_1 Im(b)))$.

\end{enumerate}

\vspace{0.2cm}

\begin{enumerate}

\item[23)] $x_{23} = -a (b_1 - 1 + 2Re(b)(b - 2ia_1Im(b))) $,

\item[24)] $x_{24} = a(b^* - 2Re(b)(a_1 + b_1 -2a_1b_1) )$.

\end{enumerate}

\vspace{0.2cm}

\begin{enumerate}

\item[34)] $x_{34} = 2ia_1Im(b)(a_1 - 1) $.

\end{enumerate}

\vspace{0.2cm}

The other entries of ${\bf M}(\QCNOT(\rho\otimes \sigma))$ are obtained by the conjugation of the above entries. Let us consider the system of equations 
\begin{equation}\label{SIST}
(x_{i,j} = 0)_{1\leq i,j \leq 4}.  
\end{equation}

Note that $x_{11} = 0$ iff $b_1 = \frac{1}{2}$, $a_1 = 0$ or $a_1 = 1$. We shall study these cases

\begin{enumerate}
\item[1] {\bf Case $b_1 = \frac{1}{2}$}
\end{enumerate}

By $x_{13} = 0$ we have that $-a(b^* - Re(b)) = 0$. Thus we have to consider two subcases, $a = 0$ or $b^* = Re(b)$ i.e. $b\in {\mathbb R}.$

\begin{itemize}
\item[1.1] Note that the conditions $b_1 = \frac{1}{2}$, $a= Im(b) = 0$ is a solution of the system (\ref{SIST}) that characterize the input    
$$
\rho =
  \left(\begin{array}{cc}
    a_1 & 0 \\
    0 & 1- a_1
  \end{array}\right) \hspace{0.2cm} and
  \quad
  \sigma =
  \left(\begin{array}{cc}
    \frac{1}{2} & b \\
    b & \frac{1}{2}
  \end{array}\right).
$$
In this way $\QCNOT(\rho\otimes \sigma) = \left(\begin{array}{cccc}
      \frac{a_1}{2} & a_1b & 0 & 0  \\
      a_1b & \frac{a_1}{2} & 0 & 0  \\
      0 & 0 & \frac{1-a_1}{2} & (1-a_1)b  \\
      0& 0 & (1-a_1)b & \frac{1-a_1}{2}  \\
          \end{array}\right)$ which is factorizable as $\rho\otimes\sigma.$

\item[1.2] $b\in {\mathbb R}$. By $x_{23} = 0$ we have that $-a(-\frac{1}{2} + b^2) = 0$, giving the following three possible cases:

\begin{itemize}
\item $a= 0$ which is the case 1.1.

\item $b= \pm \frac{1}{2}$. It provides solutions to the system (\ref{SIST}) that respectively characterizes the input
$$
\sigma = \frac{1}{2}
  \left(\begin{array}{cc}
    1 & \pm 1 \\
    \pm 1 & 1
  \end{array}\right) \mbox{for an arbitrary $\rho$}.
$$
In this way $\QCNOT(\rho\otimes \sigma) = \frac{1}{2}\left(\begin{array}{cccc}
      a_1 & \pm a_1 & \pm a & a  \\
      \pm a_1 & a_1 & a & \pm a  \\
      \pm a^* & a^* & 1-a_1 & \pm (1- a_1)  \\
      a^*& \pm a^* & \pm (1- a_1) & 1-a_1  \\
          \end{array}\right)$ which is factorizable as 

$$
\frac{1}{2}
  \left(\begin{array}{cc}
    a_1 & \pm a \\
    \pm a^* & 1-a_1
  \end{array}\right) \otimes 
  \left(\begin{array}{cc}
    1 & \pm 1 \\
    \pm 1 & 1
  \end{array}\right).
$$

\end{itemize}

\end{itemize}

\begin{enumerate}
\item[2] {\bf Case $a_1 = 0$}
\end{enumerate}

\noindent
By $x_{13} = 0$ $-a(b^* - Re(b)b_1) = 0$. Thus we have to consider two subcases, $a = 0$ or $b^* = 2Re(b)b_1$

\begin{itemize}
\item[2.1] Note that $a= a_1 = 0$ is a solution of the system \ref{SIST} that characterizes the input: $\rho = P_1$ and arbitrary $\sigma$. 

 In this case $\QCNOT(P_1\otimes \sigma) =P_1\otimes(\sigma_1\sigma\sigma_1)$, where $\sigma_1$ is the Pauli matrix introduced above.

\item[2.2] $b^* = 2Re(b)b_1$. Since $b_1 \in {\mathbb R}$, $b \in {\mathbb R}$ and $b(1-2b_1)= 0$. It provides two possibles subcases $b=0$ or $b_1 = \frac{1}{2}$. 

\begin{itemize}
\item $b=0$. By $x_{14}=0$ we have that $a=0$ or $b_1 =0$. The case $a=0$ is an instance of the case 2.1. If $b_1 = 0$ then, the equation $x_{23} = 0$ forces $a= 0$ which is also an instance of the case 2.1.    

\item $b_1 = \frac{1}{2}$. It is an instance of the case 1.

\end{itemize}
\end{itemize}

\begin{enumerate}
\item[3] {\bf Case $a_1 = 1$}
\end{enumerate}

By $x_{13} = 0$ we have that $-a(b^* + 2Re(b)(b_1 -1 )) = 0$. It provides two possibles subcases: $a= 0$ or $b(2b_1 - 1)=0$ where $b\in {\mathbb R}$.

\begin{itemize}
\item[3.1] $a= 0$, $a_1 = 1$ is a solution of the system \ref{SIST} that characterize the input: $\rho = P_0$ and arbitrary $\sigma$. 

 In this way $\QCNOT(P_0\otimes \sigma) = P_0\otimes \sigma.$

\item[3.2] $b(2b_1 - 1)$ gives two possibilities: $b=\frac{1}{2}$ or $b=0$

\begin{itemize}
\item $b=\frac{1}{2}$ is an instance of the case 1.1.

\item $b=0$. By $x_{14}=0$ we have that $ab_1 = 0$. The case $a=0$ is an instance of the case 3.1. If $b_1 = 0$, by $x_{23} = 0$ follows that $a=0$ which is also an instance of the case 3.1. 

\end{itemize}

\end{itemize}

Thus we have analyzed all possible solution of the system \ref{SIST} characterizing the preservation of factorizability for $\QCNOT$.

\qed
\end{proof}

\end{document}